\newtheorem{proposition}{Proposition}
\newcommand*\diff{\mathop{}\!\mathrm{d}}
\DeclareMathOperator{\E}{\mathbb{E}}
\newcommand{\ie}{\textit{i.e.}\xspace}
\begin{document}
\normalem
\date{}

\title{\Large \bf Elastic Provisioning of Cloud Caches:\\a Cost-aware TTL Approach}

\author{
{\rm Damiano Carra}\\
University of Verona
\and
{\rm Giovanni Neglia}\\
Universit\'e C\^ote d'Azur, Inria
\and
{\rm Pietro Michiardi}\\
Eurecom
} 

\maketitle

\thispagestyle{empty}

\subsection*{Abstract}
We consider elastic resource provisioning in the cloud, focusing on in-memory key-value stores used as caches. Our goal is to dynamically scale resources to the traffic pattern minimizing the overall cost, which includes not only the storage cost, but also the cost due to misses.
In fact, a small variation on the cache miss ratio may have a significant impact on user perceived performance in modern web services, which in turn has an impact on the overall revenues for the content provider that uses those services.

We propose and study a dynamic algorithm for TTL caches, which is able to obtain close-to-minimal costs. Since high-throughput caches require low complexity operations, we discuss a practical implementation of such a scheme requiring constant overhead per request independently from the cache size. We evaluate our solution with real-world traces collected from Akamai, and show that we are able to obtain a 17\% decrease in the overall cost compared to a baseline static configuration.

\section{Introduction}

In-memory key-value stores used as caches are a fundamental building block for a variety of services, including web services and Content Delivery Networks (CDN). With the advent of cloud computing, these services, including the caches, have been offered as managed platforms with a pay-as-you-go model. Amazon's ElastiCache \cite{amazonElastiCache} and Microsoft's Azure Redis Cache \cite{azureRedisCache} are examples of caches that employ popular open source software such as Memcached \cite{memcached} or Redis \cite{redis}.

Elasticity, \ie, the ability to adapt to workload changes, is a key characteristic of cloud computing: auto-scaling tools, configured by the users, determine the amount of cloud resources to deploy. The techniques used to drive the scaling process have been the subject of many studies in the past -- see \cite{lorido2014review} and the references therein. These studies mainly focus on traditional services, such as computing, where the relation between the performance and the amount of deployed resources is usually simple. For instance, a highly loaded web server can be duplicated, so that each instance receives half of the traffic, with an almost linear impact on the load.

When considering the caches, the relation between a key performance index, the hit ratio (or, equivalently, the miss ratio), and the resources deployed is not linear, e.g., doubling the cache size does not correspond to doubling the hit ratio (or halving the miss ratio). The analysis of dynamic adaptation of caches has received little attention: the few studies have focused on minimizing storage costs for a given target hit ratio, disregarding the possibility to leverage the hit ratio elasticity itself and ignoring misses costs.

Several studies have highlighted the cost of delay for web services \cite{delayCost}, \ie, a direct connection between the response time (or web page load time) and economic losses, for example because the customer does not finalize a purchase. Notice that, even a small increase in the miss ratio (e.g., 1\%), often translates into a high variation in the average latency (e.g., 25\%) \cite{cidon2015dynacache}. Misses can also translate to infrastructure costs because of the additional load on back-end databases or content servers.  Beyond these specific examples, in this paper we assume that it is possible to quantify the \emph{cost} due to misses. Then, when analyzing  dynamic cache resource allocation, these costs should be considered.

In this paper we study the dynamic assignment of resources for in-memory data stores used as caches. To this aim, we take into account the cost of the storage \emph{and} the cost of the misses, and we adapt the amount of resources to the traffic pattern minimizing the total cost. We consider an approach based on TTL caches \cite{choungmo14}, and we study a model in which the Time-To-Live (TTL) is adapted through stochastic approximation iterations and dynamically converges to the best setting. We operate the system using a virtual TTL cache, whose virtual size informs the elastic deployment of cache server instances to ingest incoming requests.

High-throughput caches rely on low complexity operations: for instance, key lookup and update in LRU caches have $O(1)$ complexity per request. This bound is considered a hard requirement for CDNs running on commodity hardware~\cite{berger2017adaptsize}. The auto-scaling tool, therefore, should not have higher complexity, otherwise it may represent a performance bottleneck. For this reason, based on the results obtained from the model, we design a practical policy to automatically scale horizontally caches, which has $O(1)$ complexity per request.

We evaluate such TTL-based solution with a testbed, using real-world traces collected from production cache servers for over 30 days in Akamai, one of the largest Content Delivery Networks. We show that our approach can achieve the same savings obtained by previously proposed solutions based on Miss Ratio Curves (MRCs) \cite{saemundsson2014dynamic}, which are less scalable because they have a per-request computational overhead that grows logarithmically with the cache size.

\vspace{4mm}
\noindent
{\bf Contributions:} We make the following contributions.
\begin{itemize}
\item {\it TTL-based approach:} We propose and study a dynamic algorithm for TTL caches, which adapts the TTL value to both misses and storage costs minimizing the total operational expenditure (\S\,\ref{sec:adapt-TTL}).
\item {\it Design and implementation of a horizontally scalable TTL-based solution:} We design and implement a system based on the TTL approach, which dynamically adds and removes cache instances in order to maintain the total cost at minimum. We pay particular attention to system scalability, and provide a $O(1)$ solution, as the operations employed in high-throughput caches (\S\,\ref{sec:implementation}).
\item {\it Evaluation:} We evaluate the TTL-based solution in our testbed with real-world traces from Akamai, and show a decrease in the total cost of 17\% compared to the fixed size approach (\S\,\ref{sec:experim-eval}).
\end{itemize}

\noindent
{\bf Roadmap:} In Sect.\,\ref{sec:background} we provide some background, and we define the problem, while in Sect.\,\ref{sec:rel-works} we discuss the related works. We present the general framework of a TTL-based solution in Sect.\,\ref{sec:adapt-TTL}, and discuss its practical implementation in Sect.\,\ref{sec:implementation}. We evaluate our approach in Sect.\,\ref{sec:experim-eval} and conclude the paper in Sect.\,\ref{sec:concl}

\section{Background and problem definition}
\label{sec:background}

\subsection{In-memory data stores}
In-memory key-value stores represent a fundamental piece of web architectures. They are used to cache popular contents, so that the web application can access quickly to the frequently requested data, while the back-end database contains the original copy of all the contents. For instance, Facebook heavily relies on caches based on in-memory data stores, and organize them hierarchically in order to store and access to a complex set of contents \cite{Atik12}.

The most used in-memory stores are Memcached \cite{memcached} and Redis \cite{redis}. While Redis contains a richer set of APIs, when used as a cache it shares with Memcached some basic commands, such as setting a key-value entry, or retrieving the value given a key. If the cache is full and a new content needs to be inserted, both systems employ slight variations of the Least Recently Used policy (LRU). In particular, Memcached organizes the content into classes of objects with similar sizes, and performs LRU within each class. Redis picks randomly 5 objects and evicts the one least recently accessed; if the available space is not sufficient, it repeats the process.

The amount of RAM assigned to Memcached or Redis instances is set when the instance is created, and it cannot be changed at runtime. In order to achieve vertical scalability -- \ie, changing the amount of memory at runtime -- the only option is to create a new instance with the desired amount of memory and transfer the content from the old instance to the new one. Since this approach takes time and resources, vertical scalability is usually not considered practical.

On the other hand, horizontal scalability is easy to achieve. Instances can be added to (or removed from) a cluster of nodes, with a \emph{load balancer} tool (such as \texttt{mcrouter} \cite{mcrouter}) that manages all the aspects related to the distributed caches: data placement and request routing, data replication (if configured) and instance failure management. Data placement and request routing may use consistent hashing to map keys and nodes to points on the hash space, and key responsibility is assigned to the closest node in the hash space.

In this paper, we consider the basic scenario where the content is not replicated across instances and one load balancer is sufficient for managing the cluster. The results can be easily extended to any replication factor that the user may decide to adopt.

\subsection{Elastic on-demand services}
Cloud computing enables services to be instantiated on demand, according to the volume of traffic. In the case of web architectures, for instance, it is possible to modify the number of web servers to accommodate the increasing traffic. Service providers have recently included, among the different services, in-memory data stores used as caches. Prominent examples are Amazon's ElastiCache \cite{amazonElastiCache} and Microsoft's Azure Redis Cache \cite{azureRedisCache}.

These managed solutions take care of the details of the caches, such as software update and maintenance, and provide simple APIs to create and shut down instances, and manage the corresponding cluster of such instances.

The user can choose among a set of possible configurations for each instance. For example, Amazon's ElastiCache \cite{elastiCachePricing} allows the customer to choose among instances with different  RAM size and number of cores (vCPUs). Different types of instance are also available, like regular, spot and burstable ones. The latter two types refer to instances whose capacity may be changed (reclaimed) by Amazon. In this work, we mainly focus on regular instances.

\subsection{Problem definition}
\label{subsec:problem-def}

In this work, we focus on the caches, without considering the other elements of the specific service which exploits the caches, such as the web server, the back-end databases or the origin server if the cache is part of a CDN. Our aim is to adapt over time the total cache size to the content request pattern in order to minimize the total cost, that is the sum of the storage cost and the cost due to the misses.

The storage cost is immediate to evaluate, because it is determined by the pricing scheme of the cloud provider (we provide later specific examples for Amazon ElastiCache service). The provider offers different possible configurations with different costs. As a design choice, we can consider either a set of homogenous instances, or combine heterogeneous instances. The latter option introduces a set of management issues, such as the selection of instances to switch on and off when scaling, which are not simple to deal with. Therefore, we focus on homogenous instances.

Since the cost model of the service providers usually has a specific granularity (typically, one hour), we consider fixed intervals that we call epochs, and the choice of changing the number of instances is done at the end of each epoch.
 The alternative---an asynchronous scenario where instances can be added or removed at any moment---poses more challenges. For instance, it is not simple to decide when addition and removal should be done. Moreover, turning off an instance before the billing period ends is uselessly hurtful  since the user will pay for the full period anyway, therefore desynchronised choices may lead to waste of resources. Let $I(k)$ be the number of instances selected during the $k$-th epoch and $c^{s}$ be the cost of one instance. The storage costs over the first $k$ epochs is then
 \[C^{s}(1,k)= \sum_{h=1}^k c^{s} I(h)\]

Let us now consider the other contribution to the total cost. The cost of a miss can correspond to the additional delay experienced by the final user or to the additional load on the origin server, e.g.~in terms of number of requests or bytes to serve. In any case, we assume indeed that the service provider has quantified monetarily the miss cost. There are for example several studies on the connection between delay and revenues like~\cite{delayCost}. We denote by $m_o$ the miss cost for object $o$, that we assume to be deterministic and constant over time. Let $r(n)$ be the  object requested by the $n$-th miss. With some abuse of notation, we let $n \in [k_1,k_2]$ denote that the $n$-th miss occurred in the time interval corresponding to the epochs $k_1$, $k_1+1$, \dots $k_2$, with $k_1<k_2$. The total miss cost per time unit during the first $k$ epochs is then
\[C^m(1,k)= \sum_{n \in  [1, k]} m_{r(n)}.\]

Our goal is to select the number of instances $I(1)$, $I(2)$, \dots\, $I(k)$, in order to minimize the total cost. The tradeoff is evident. At any epoch, a larger number of instances would decrease the number of misses---and therefore the corresponding cost---but it would cause a higher cost due to storage. Conversely, a smaller number of instances would increase the cost due to misses, but it would decrease the cost of storage. In what follows we present a policy that, at the end of each epoch, determine the number of instances in the cluster such that the (estimated) total cost for the next epoch is minimal.

\subsection{On the complexity of the solution}
\label{subsec:complexity}
In order to deliver high throughput, caches require small processing overheads. For example, $O(1)$ time complexity \emph{per request} is considered a hard requirement for CDN caches~\cite{berger2017adaptsize}. At high request rates, more complex operations can pose an intolerable load on the CPU causing spurious misses~ \cite{neglia17tompecs}, \ie a requested content may not be served even if present in the cache. This is one of the reasons why eviction policies such as LRU and LFU are widely adopted: in fact their operations (key lookup, removal and insertion) have $O(1)$ time complexity. On the contrary, more sophisticated eviction policies proposed in the literature, such as the Greedy Dual Size \cite{cao1997cost} and LRFU \cite{lee1999existence}, may improve over LRU in terms of hit ratio, but have often $O(\log M)$ time complexity per request (where $M$ is the number of objects in the cache) and therefore they pose a high burden on the CPU.

Not only the eviction policy, but any operation related to the cache -- including the operations of the load balancer to route requests and dynamically adapt the cache size -- needs to have small processing overheads. In order to show the impact of the computational complexity on the system, we set up an experiment focusing on the load balancer, where we compare the basic scenario  -- a fixed number of cache instances, where the load balancer simply routes requests to the correct instance using the hash of the key -- with an improved load balancer that computes dynamically the number of instances to use in the cluster. The improved load balancer may use two policies: (i) our TTL-based solution, which has $O(1)$ time complexity, and (ii) a MRC-based solution, which has $O(\log M)$ time complexity (see Sect.\,\ref{sec:rel-works} for a discussion on the complexity of the MRC computation).

For the experiment, we use a trace with more than 2~billion requests collected by Akamai -- the characteristics of the trace and the testbed used are described in Sect.\,\ref{sec:experim-setup}. The trace contains, for each requests, a timestamp. In the first experiment, we replay the trace, \ie, we generate the requests following the timestamps provided in the trace. Figure\,\ref{fig:MRCvsTTL_perf}, left, shows the CPU load over time for two representative days. We see that the additional task to compute the MRC leads to almost double the CPU usage in comparison to the basic scenario, where the load balancer only distributes the requests among a fixed number of instances. On the contrary, the overhead of our TTL-approach is below $20$\%. While the hardware we used in the testbed was adequate to support also the computationally expensive MRC-approach, should the requests rate increase, a scheme with logarithmic complexity would not be able to cope with the processing leading to spurious misses~\cite{neglia17tompecs}.

\begin{figure}[htbp]
\vspace{-5mm}
\hspace{-5mm}
\begin{tabular}{p{0.62\linewidth} p{0.34\linewidth}}
  \vspace{0pt} \includegraphics[width=\linewidth]{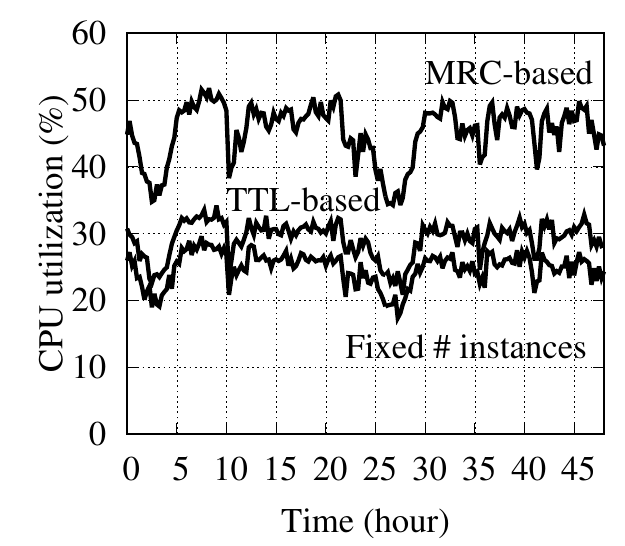} &
  \vspace{0pt} \includegraphics[width=\linewidth]{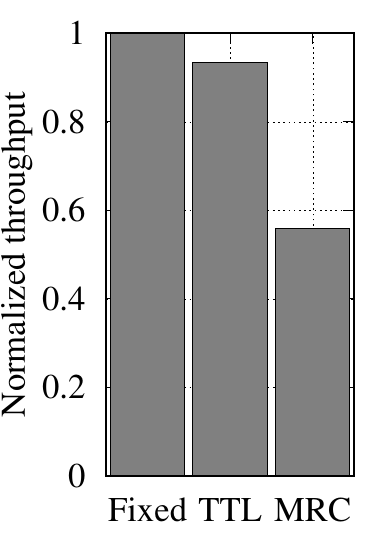}
\end{tabular}
   \caption{Left: CPU load using a fixed requests routing scheme and our TTL-based solution (which have $O(1)$ time complexity) compared to MRC-based solution (which has $O(\log M)$ time complexity). Right: Throughput normalized to the fixed scheme case.}
   \label{fig:MRCvsTTL_perf}
\end{figure}

These findings are confirmed in the second experiment, for which we ignored the trace timestamp, and generated a new request as soon as we received the reply from the load balancer for the previous request. This would provide an indication of the maximum throughput achievable by the different schemes. For ease of representation, we normalize the throughput with respect to the basic scenario with a fixed number of instances. The results are shown in Fig.\,\ref{fig:MRCvsTTL_perf}, right. While our TTL solution experiences about 8\% throughput reduction due to the additional data structure we maintain, the MRC solution almost halves the achievable throughput.

\section{Related works}
\label{sec:rel-works}

Elastic resource provisioning of cloud services has been the subject of many studies. The authors in \cite{lorido2014review} provide a general overview of the techniques, such as control theory as used in \cite{lim2010automated}. Despite the broad set of results and their computational complexity, it is not clear if they can be applied in the context we consider, where the relation between the resource deployed and the key performance index (the hit ratio) is not linear. Moreover, none of them are based on stochastic approximation as our main result on TTL caches.
Another prominent example of a general approach for auto-scaling is given by \cite{shen2011cloudscale} but the proposed solution is based on tools (e.g., time series prediction) that are computationally intensive for the high-throughput scenario we consider.

Similarly to our problem, memory management aims to determine the amount of memory to assign to competing applications, but the proposed solutions, such as \cite{cidon2015dynacache} \cite{carra2014memory} \cite{hu2015lama} \cite{ou2015penalty}, all require computations with higher complexity than our solution.

As for minimizing costs in a cloud computing environment, the authors in  \cite{xu2016blending} and \cite{wang2017exploiting} explore the use of spot instances for different aims, such as content replication or decreasing the overall storage cost. Despite the computational complexity of the solution, the proposed schemes do not take into consideration in the minimization the cost due to misses, as we do. The authors in \cite{wang2017exploiting} also consider a policy for modulating the allocation of on-demand instances to match the dynamic needs, but they do not describe it in detail.

Part of our TTL-based solution is based on the concept of a virtual cache, that maintains the metadata of some cacheable objects, but not their actually content. These objects are sometimes referred to as \emph{ghosts}. This additional information is used in many caching schemes to decide how to manage the objects in the physical cache. For example in 2-LRU~\cite{garetto16} the virtual cache is managed by LRU and a content is actually stored in the physical cache only if its metadata is already present in the virtual cache. As another example, ARC~\cite{megiddo2003arc} uses two virtual caches to decide which contents should be evicted to make space for a new one. Differently from the cases above, we use a virtual cache to size the physical one.

A recent work~\cite{basu2017adaptive} also explores how to adapt the TTL value to the request pattern by using stochastic approximation. In particular, the authors focus on vertical scaling and aim to achieve a target hit ratio, possibly with a small cache size. On the contrary our approach addresses horizontal scaling to minimize the total operational cost.

\vspace{3mm}
\noindent
{\bf MRC-based solutions.} Miss Ratio Curves (MRCs) are a well-known tool for cache profiling \cite{saemundsson2014dynamic}: in a single graph it is possible to observe the relation between cache size and miss ratio, therefore one can compute the cost of the storage and estimate the cost of the misses for each point. The main issue with MRCs is their computational complexity. The seminal algorithm proposed by Mattson \cite{Mattson} takes $O(M)$ operations per request, where $M$ is the number of objects in the cache. A more efficient implementation, proposed by Olken \cite{zhong2009program}, makes use of a tree data structure (e.g., \emph{counting B-Tree}) to keep track of the objects in the cache. This reduces time complexity to $O(\log M)$ per request. In order to decrease the complexity to $O(1)$ per request, many solutions have been proposed in the literature that compute approximate MRC~\cite{saemundsson2014dynamic} \cite{wires2014characterizing} \cite{waldspurger2015efficient} \cite{waldspurger2017cache}. Such solutions share a common characteristic: they have been designed considering objects with uniform sizes. On the contrary, the applications we are interested into exhibit contents with \emph{heterogeneous sizes}.

We observe that it is possible to extend Olken's approach to MRC computation to the case of heterogeneous size contents maintaining $O(\log M)$ complexity per request. To this aim, we suggest to use a special tree, called \emph{order statistics tree}, that has a method \texttt{rank($x$)}, which returns the sum of the weights of the elements with keys less than or equal to $x$ (the weights are the object size).\footnote{This is how we compute MRCs in this paper. We suspect that this approach may be known, but we were not able to find it described elsewhere.} On the contrary, it is not clear how to extend the algorithms to compute approximate MRCs, while maintaining the same level of accuracy. We support this claim with the following experiment.
We consider in particular the method proposed in~\cite{waldspurger2015efficient} \cite{waldspurger2017cache} (but the others operate in a similar way): the request trace is sampled with rate $R$, a first MRC is computed on the subsampled trace and then scaled it up opportunely to get the approximate one for the whole traffic. A constant sampling rate would lead to $O(\log R M)$ time complexity. By adapting dynamically $R$ according to the available memory, it is possible to reach $O(1)$.

We use a trace described in \cite{neglia17tompecs}, \S\,4.4 -- the distribution of the object popularity and object size are also reported in Fig.\,\ref{fig:trace_intput}, but for understanding the results of the experiment, the details of the trace are not essential. For each request, besides the timestamp and the object identifier, we have the object size. First, we ignore the actual object size and assume it to be uniform. In this case, the method predicts the MRC with a prediction error\footnote{
	As in~\cite{waldspurger2015efficient}, the error is evaluated by measuring the absolute difference between the exact and the approximated MRCs over all the meaningful cache sizes, and then by computing the mean of these absolute differences.}
	smaller than $3*10^{-3}$ for all sampling rates between  0.1 to 0.001---see Fig.\,\ref{fig:mae_Het_hist}---, similarly to what observed in the original papers \cite{waldspurger2015efficient} \cite{waldspurger2017cache}. We then consider the object sizes and repeat the experiments: for a given sampling rate, the error increases by one order of magnitude! Moreover, in order to reach a given target error it may be necessary to increase by two orders of magnitude the sampling rate. Correspondingly, the dynamic sampling rate approach described in~\cite{waldspurger2015efficient} would require a larger memory footprint and a larger number of operations for request.

\begin{figure}[htbp]
   \centering
   \includegraphics[width=0.8\linewidth]{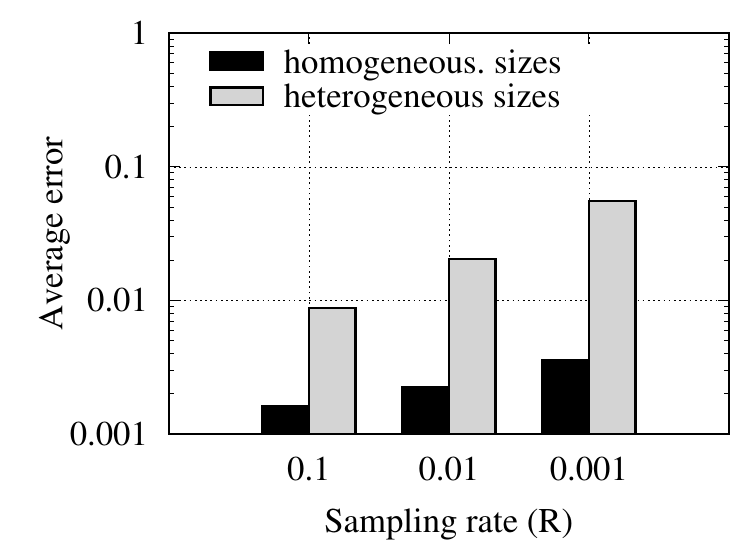}
   \caption{Accuracy of the approximate MRC computation through sampling, with uniform and nonuniform object sizes.}
   \label{fig:mae_Het_hist}
\end{figure}

This simple experiment shows that object sizes may have an unexpected impact on the accuracy of the approximated MRC computation. Other approximated techniques, designed to have $O(1)$ time complexity per request, such as MIMIR \cite{saemundsson2014dynamic} or AET \cite{hu2016kinetic}, may be affected when considering a scenario with heterogeneous object size. In MIMIR, the authors state that their solution can be extended and consider non uniform sizes, but they do not show experimentally their claim, and their solution is not publicly available. Therefore, it is not possible to understand the impact of heterogeneous object size on the accuracy of their method. Similarly, the AET approach assumes uniform object size, and it is not clear if it can be easily extended to consider heterogeneous object size.

In summary, the approximated computation of the MRC, even if it can be performed with $O(1)$ time complexity per request, still need to be studied in depth, especially for the heterogeneous size case. Currently, therefore, the only option is to compute the MRCs exactly, which has $O(\log M)$ complexity per request.

\section{Adaptive TTL based solutions}
\label{sec:adapt-TTL}

In this section, we begin with a key building block to the design of a horizontally scalable caching system. Our work draws inspiration from Time-To-Live (TTL) caches, i.e.~caches that are managed by a TTL policy. There are two families of TTL policy: with and without renewal. In both cases, upon a miss, the content is stored locally and a timer with duration equal to $T$ is activated and the content is evicted when the timer expires. The difference is that, in the case with renewal, the timer is reset by the following hits for the content, while it is not affected by them in the case without renewal. TTL caches are a natural model for DNS caches, but they have also been proposed as an approximated model to study the performance of existing replacement policies like LRU~\cite{che02}. Moreover, different papers have suggested their practical use because of their higher configurability as well as amenability to analysis~\cite{choungmo14,dehghan16,basu2017adaptive}. While a replacement policy maintains in the cache as many contents as the available space buffer allows (contents are evicted only if needed to make space, under a TTL policy the actual storage keeps varying over time and is, in theory,  potentially unbounded. A real implementation of a TTL cache will have a finite capacity and then it may need to evict some contents from the cache even if their timer has not expired yet. Some of these practical issues are discussed in~\cite{choungmo14}. In our solution a TTL cache with renewal is used as a virtual cache, storing only content metadata:\footnote{The total storage required by the virtual cache will then be negligible.} by computing its virtual size, our approach steers the addition or removal of cache server instances.

\subsection{Dynamic adaptation}
We present an adaptive mechanism based on stochastic approximation by which the timer value converges to the value that minimizes the total cost.

The theoretical results hold in the following scenario. We consider a finite catalogue with $N$ contents and that requests for the different contents occur according to independent renewal processes. We denote by $\lambda_i$ the request rate for content $i$. A case of particular interest in what follows is the case where these processes are Poisson ones. Then, a given request will be for content $i$ with probability $\lambda_i / \sum_{j=1}^N \lambda_j$ independently from any previous request. This is (a continuous version of) the well known Independent Reference Model (IRM) \cite{coffman1973operating}.

In what follows, we consider an ideal TTL cache with renewal and assume that the cloud service charges the user only for the instantaneous storage occupancy. This differs from the more realistic scenario described above where the user needs to pay for the instances independently from their usage, but we will come back to the more realistic billing in Sect.\,\ref{sec:ttl_practical}.
Let $s_i$ be the size of object $i$ and $c$ be the cost per unit time to store a unit of content (\cite{wang2017exploiting} shows that prices are almost linear also for real cloud services).
Then, the total cost to store content $i$ over a time window of duration $\tau$ is $c s_i \tau$. For simplicity, we denote $c_i = s_i c$. A miss for content $i$ incurs a cost equal to $m_i$.

Let $X_i(t)$ be the indicator function for the event ``content $i$ is stored in the cache at time $t$'' and $M_i(t)$ the counting process of content $i$ misses in the interval $[0,t]$.
We can define the storage cost and the miss cost analogously to what done in Sect.\,\ref{subsec:problem-def}.  The total cost  over the interval $[0,t]$ is then
\begin{align}
\label{e:cost_ttl}
 C(0,t) & = C^s(0,t) + C^m(0,t)\nonumber\\
& = \sum_{i=1}^N \int_0^t X_i(t) c_i  \diff t + M_i(t) m_i.
\end{align}

If the caching policy uses a constant TTL value equal to $T$, then each process $X_i(t)$ is a renewal process whose regeneration points are the time instants at which content $i$ misses occur. The renewal reward theorem guarantees that, for each content, the time-average cost is equal to the expected cost over a renewal period divided by the expected duration of a renewal period, \ie
\[\lim_{t \to \infty}\frac{\int_0^t X_i(t) c_i  \diff t + M_i(t) m_i}{t} = \frac{c_i \tau_{S,i} + m_i}{\tau_{M,i}},\]
where $\tau_{S,i}$ is the expected sojourn time of content $i$ in the cache and $\tau_{M,i}$ is the expected time between two misses.

The asymptotic time average cost ($\mathcal C$) of the system as a function of $T$ is then
\begin{equation}
\label{e:cost_renewal}
\mathcal C(T) = \lim_{t \to \infty} \frac{C(0,t)}{t}= \sum_{i=1}^N \frac{c_i \tau_{S,i} + m_i}{\tau_{M,i}}.
\end{equation}
We observe that $\tau_{S,i}/\tau_{M,i}$ is the asymptotic fraction of time content $i$ spends in the cache or equivalently, the probability that content $i$ is in the cache at a random time, that is often called the occupancy probability and we will denote by $o_i$. The inverse of $\tau_{M,i}$ is the rate at which miss occurs that we can also write as $\lambda_i (1- h_i)$, where $h_i$ is the hit ratio, \ie the fraction of requests for content $i$ that incurs a hit. Then we can rewrite \eqref{e:cost_renewal} as
\begin{equation}
\label{e:cost_renewal2}
\mathcal C(T) = \sum_{i=1}^N c_i o_i + \lambda_i m_i (1-h_i).
\end{equation}

When the arrival process is IRM, it holds $o_i=h_i$ because of PASTA property and moreover $h_i=1-e^{-\lambda_i T}$~\cite{choungmo14}. \eqref{e:cost_renewal2} becomes
\begin{equation}
\label{e:cost_irm}
\mathcal  C(T) = \sum_{i=1}^N  c_i + (\lambda_i m_i -c_i) e^{-\lambda_i T}.
\end{equation}
We can check that if $T=0$, \ie no content is stored in the cache,  the cost per time unit is equal to $\sum_{i=1}^N \lambda_i m_i$: we pay systematically for all the misses. Instead, if $T_i=\infty$, all the contents are stored indefinitely and the corresponding cost per time unit is $\sum_{i=1}^N  c_i$.

We could look for the value $T^*$ that minimizes the cost~\eqref{e:cost_irm} by applying a gradient algorithm as follows:
\begin{align*}
T(n+1) & = T(n) - \epsilon(n) \frac{d \mathcal C}{d T}\Big|_{T(n)} \\
	& =T(n) + \epsilon(n) \sum_{i=1}^N \lambda_i e^{-\lambda_i T(n)} \left( \lambda_i m_i - c_i \right),
\end{align*}
where the sequence $\epsilon(n)$ converges to zero as $n$ diverges, but it is not summable, \ie $\sum_{n \in \mathbb N} \epsilon(n) = \infty$.
This approach is not viable because in a realistic scenario popularities are unknown, keep changing over time and are not easy to estimate. The gradient algorithm suggests us a practical solution based on stochastic approximation~\cite{kushner03}. We observe that $\lambda_i e^{-\lambda_i T}= \lambda_i (1-h_i)$ is equal to the miss rate for content $i$. Upon a miss, this is for content $i$ with probability proportional to $\lambda_i (1-h_i)$.
Let $r(n)$ be the object requested at the $n$-th miss and $\hat \lambda_i(n)$ be an unbiased estimate of the arrival rate $\lambda_i$. Consider the following update rule for the variable $T(n)$:
\begin{equation}
\label{e:ttl_update}
T(n+1) = T(n) + \epsilon(n) \left( \hat \lambda_{r(n)} m_{r(n)} - c_{r(n)} \right),
\end{equation}
where the correction term $ \hat \lambda_{r(n)} m_{r(n)} - c_{r(n)} $ is a random variable because i)  content requests occur according to IRM and ii) the estimator itself is a random variable.
The correction corresponds ``on average'' to the gradient $d\mathcal  C/dT$ because, upon a miss, the fraction of requests for content $i$ is proportional to $\lambda_i e^{-\lambda_i T(n)}$, and then $\E(\hat \lambda_i m_i - c_i)= \lambda_i m_i - c_i$. The following proposition makes this result formal.

\begin{proposition}
\label{p:stoch_approx}
Let $\{X(n,T(n))\}$ be a sequence of independent random variables such that $X(n,T(n))$ is equal to $\hat \lambda_i m_i - c_i$ with probability $\lambda_i e^{-\lambda_i T(n)} / (\sum_{j=1}^N \lambda_j e^{-\lambda_j T(n)})$. Let $\{\epsilon(n)\}$ be a non-negative sequence converging to $0$, such that $\sum_{n\in \mathbb N} \epsilon(n) = \infty$ and $\sum_{n\in \mathbb N} \epsilon^2(n) < \infty$. Consider the update rule
\[T(n+1) = \Pi_{[0,T_{\max}]}\! \left(T(n) + \epsilon(n) X(n,T(n))\right),\]
where $\Pi_{[0,T_{\max}]}\!(x)=\min(\max(0,x),T_{\max})$ is the projection operator over the interval $[0,T_{\max}]$, then the sequence $T(n)$ converges with probability one to i) a stationary point of $\mathcal C(T)$ or ii) $0$ or $T_{\max}$, if $0$ and $T_{\max}$ are local minima of $\mathcal C(T)$.
\end{proposition}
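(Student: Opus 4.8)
The plan is to recognise the update rule as a projected Robbins--Monro recursion and invoke the ODE method for constrained stochastic approximation (the results collected in~\cite{kushner03}). First I would rewrite the iteration as $T(n+1) = \Pi_{[0,T_{\max}]}\!\left(T(n) + \epsilon(n)\left(g(T(n)) + \delta M(n)\right)\right)$, where $g(T) = \E[X(n,T)\mid T(n)=T]$ is the mean field and $\delta M(n) = X(n,T(n)) - g(T(n))$ is a martingale-difference noise with respect to the natural filtration. The key computation is to evaluate $g$: using the unbiasedness of the estimator, $\E[\hat\lambda_i m_i - c_i] = \lambda_i m_i - c_i$, and averaging over the selection probabilities gives
\[
g(T) = \frac{\sum_{i=1}^N \lambda_i e^{-\lambda_i T}(\lambda_i m_i - c_i)}{\sum_{j=1}^N \lambda_j e^{-\lambda_j T}} = -\frac{1}{\Lambda(T)}\frac{d\mathcal C}{dT},
\]
where $\Lambda(T) = \sum_{j=1}^N \lambda_j e^{-\lambda_j T} > 0$. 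Thus the drift is a strictly positive, $T$-dependent rescaling of the negative gradient $-d\mathcal C/dT$, so the associated ODE $\dot T = g(T)$ is a time-reparametrised negative-gradient flow of $\mathcal C$ and shares its equilibria and their stability types.

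Next I would verify the standard hypotheses of the constrained ODE theorem. The step sizes satisfy the Robbins--Monro conditions by assumption ($\epsilon(n)\to 0$, $\sum_n \epsilon(n) = \infty$, $\sum_n \epsilon^2(n) < \infty$). Since the catalogue is finite and $\Lambda(T)$ is bounded away from zero on the compact interval $[0,T_{\max}]$, the mean field $g$ is Lipschitz continuous there. Provided each $\hat\lambda_i$ has finite variance, $X(n,T(n))$ has uniformly bounded second moment, so $\{\delta M(n)\}$ is a square-integrable martingale-difference sequence with $\sum_n \epsilon^2(n)\,\E[\delta M(n)^2] < \infty$, which controls the noise. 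Boundedness of the iterates is automatic because the projection confines them to $[0,T_{\max}]$. These are exactly the ingredients needed to conclude that $T(n)$ converges almost surely to a connected invariant set of the projected ODE $\dot T = g(T) + z$, where $z$ is the minimal reflection (normal-cone) term that keeps the trajectory in $[0,T_{\max}]$.

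To identify this limit set I would use $\mathcal C$ itself as a Lyapunov function. Along the interior flow,
\[
\frac{d}{dt}\mathcal C(T) = \frac{d\mathcal C}{dT}\,g(T) = -\frac{1}{\Lambda(T)}\left(\frac{d\mathcal C}{dT}\right)^2 \le 0,
\]
with equality only at stationary points of $\mathcal C$. By LaSalle's invariance principle the part of the limit set lying in the interior reduces to $\{T : d\mathcal C/dT = 0\}$, which is case (i). For the boundary, the reflection term makes $0$ (resp.\ $T_{\max}$) a rest point of the projected ODE precisely when the unconstrained drift points outward there, i.e.\ $g(0)\le 0$ (resp.\ $g(T_{\max})\ge 0$); since $g = -\Lambda^{-1}\,d\mathcal C/dT$ with $\Lambda>0$, this is equivalent to $d\mathcal C/dT|_0 \ge 0$ (resp.\ $d\mathcal C/dT|_{T_{\max}} \le 0$), i.e.\ to $0$ (resp.\ $T_{\max}$) being a local minimum of $\mathcal C$, which is case (ii).

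The main obstacle I anticipate is the careful treatment of the projection/boundary behaviour: one must argue that the endpoints survive as limit points of the constrained recursion only under the stated local-minimum condition, which requires the normal-cone characterisation of rest points of the projected ODE rather than the plain gradient flow. A secondary technical point is passing from convergence to the \emph{set} of stationary points to convergence to a single point; this is immediate if the stationary points are isolated (generic here, since $d\mathcal C/dT$ is an exponential polynomial, hence analytic with isolated zeros unless identically zero), and otherwise follows from the monotonicity of $\mathcal C$ along trajectories. The remaining verifications---the Lipschitz bound and the finite-variance noise condition---are routine given the finite catalogue and an unbiased, finite-variance rate estimator.
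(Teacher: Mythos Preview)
Your proposal is correct and follows essentially the same route as the paper: both reduce the claim to the projected stochastic-approximation/ODE results of~\cite{kushner03}, with $\mathcal C$ playing the role of the potential function. The paper's own proof is a one-line invocation of Theorem~2.1 in~\cite{kushner03} after noting that hypotheses (A2.1)--(A2.7) hold with $f(\cdot)=\mathcal C(\cdot)$; your write-up simply makes explicit what that invocation entails---the computation $g(T)=-\Lambda(T)^{-1}\,d\mathcal C/dT$, the Robbins--Monro and martingale-difference verifications, and the normal-cone treatment of the endpoints---so there is no substantive divergence in approach.
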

\begin{proof}
The result follows from Theorem 2.1 in~\cite{kushner03}. All the hypotheses $(\mbox{A}2.1)-(\mbox{A}2.7)$ are satisfied with $f(.)=\mathcal C(.)$.
\end{proof}

If, instead of letting the weights $\epsilon(n)$ converge to zero, we keep them equal to a small constant value $\epsilon_0$, then, in a stationary setting, $T(n)$ converge to a neighbourhood of the limits indicated in Proposition~\ref{p:stoch_approx}. At the same time, a constant weight makes it possible to track changes in the system, for example when popularities keep varying over time.

\subsection{An optimal clairvoyant TTL Policy}
\label{sec:ttl-opt}
In this section we present the optimal TTL policy (referred to as TTL-OPT), that minimizes the total cost when the sequence of future requests is known. The cost achieved by this clairvoyant policy is clearly a lower-bound for any feasible policy.
Among the TTL policies, TTL-OPT plays the same role as B\'el\'ady's algorithm~\cite{belady1966study} for replacement policies. Indeed, B\'el\'ady's algorithm minimizes the miss ratio under knowledge of the future requests and uniform content sizes. Interestingly, the optimal clairvoyant TTL policy has polynomial complexity under heterogeneous content sizes and miss costs, while in such case the Belady's policy is no more optimal: finding an optimal replacement policy is an NP-complete problem~\cite{hosseini2000optimal}, due to a hard capacity constraint, that makes the problem intrinsically combinatorial.

\begin{algorithm}
\SetKwInOut{Input}{input}\SetKwInOut{Output}{output}
\DontPrintSemicolon
\SetAlgoVlined
\Input{$\{c_i\}$, storage costs per unit of time}
\Input{$\{m_i\}$, miss costs}
\Input{request sequence}

\BlankLine
	\ForEach{request $r$}{
		$j \leftarrow \mbox{obj id of request }r$\;
		$t_{j,\mbox{\scriptsize next}} \leftarrow \mbox{time of the next request for obj }j$\;
		$c^S_j \leftarrow c_j \times  \left( t_{j,\mbox{\scriptsize next}} -  t_{\mbox{\scriptsize now}} \right)$\;
		\eIf{($c^S_j < m_j$)}{
			$T_j \leftarrow t_{j,\mbox{\scriptsize next}} -  t_{\mbox{\scriptsize now}}$ \tcp{store $j$ until its next request}
		}{
			$T_j \leftarrow 0$ \tcp{do not store $j$}
		}
	}
	\caption{Optimal Clairvoyant TTL policy (TTL-OPT)}
	\label{alg:cost_opt}
\end{algorithm}

We allow the optimal policy TTL-OPT to select a TTL value different for each content and for each request. The policy is described in Algorithm~\ref{alg:cost_opt} and is very simple: given a request for a content, say $j$, at time $t_{\mbox{\scriptsize now}}$, if the cost to store the content until its next request (at time $t_{j,\mbox{\scriptsize next}}$) is smaller than the cost of a miss for this object, then the content should be stored in the cache until the next request, i.e. the timer should be set equal to $t_{j,\mbox{\scriptsize next}}- t_{\mbox{\scriptsize now}}$. Otherwise, the object should be served but not stored.
The formal proof of TTL-OPT follows.
\begin{proposition}
The clairvoyant policy TTL-OPT in Algorithm~\ref{alg:cost_opt} minimizes the sum of storage and miss costs.
\end{proposition}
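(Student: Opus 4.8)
The plan is to exploit the fact that an ideal TTL cache carries no capacity constraint, so the cost incurred by each content is completely decoupled from the timer decisions taken for the other contents. I would first write the total cost as $C = \sum_{i=1}^N C_i$, where $C_i$ collects the storage and miss costs attributable to content $i$ alone, and argue that, absent any shared buffer, these summands can be optimized independently. This is precisely the structural feature that distinguishes the problem from replacement policies (and from B\'el\'ady's setting): there is no coupling through a hard capacity bound, so no globally coordinated policy can outperform a content-by-content optimum.

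Next I would fix a single content $i$ and denote its request epochs by $t_1 < t_2 < \dots$. Since TTL-OPT is allowed to set a (possibly new) timer at every request, the value $T_i$ chosen at $t_\ell$ governs only the inter-request interval $[t_\ell, t_{\ell+1}]$: whatever happens at $t_{\ell+1}$ (hit or miss) a fresh decision is made there, so the choice at $t_\ell$ neither depends on nor constrains any other choice. I would then decompose $C_i = m_i + \sum_\ell g_\ell$, where $m_i$ is the unavoidable miss cost of the first request (a constant independent of the policy, hence irrelevant to the minimization) and $g_\ell$ is the cost accrued on $[t_\ell, t_{\ell+1}]$ as a function of the timer set at $t_\ell$.

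The heart of the argument is a per-interval domination claim. Writing $\Delta_\ell = t_{\ell+1}-t_\ell$, I would show that only two timer choices are non-dominated: setting $T_i \geq \Delta_\ell$, which keeps the content cached throughout the interval for a storage cost $c_i \Delta_\ell$ and makes the request at $t_{\ell+1}$ a hit; or setting $T_i = 0$, which stores nothing and pays the miss cost $m_i$ at $t_{\ell+1}$. Any intermediate value $0 < T_i < \Delta_\ell$ is strictly worse than $T_i = 0$, since it pays a positive storage cost $c_i T_i$ and still incurs the miss at $t_{\ell+1}$. Hence $g_\ell = \min(c_i \Delta_\ell, m_i)$, attained exactly by the rule ``store until the next request iff $c_i \Delta_\ell < m_i$'', which is what Algorithm~\ref{alg:cost_opt} implements with $\Delta_\ell = t_{j,\text{next}} - t_{\text{now}}$; the last request of a content is handled by taking $t_{j,\text{next}}=\infty$, forcing $T_i=0$ and correctly refusing to pay for storage that can never prevent a future miss.

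To conclude, because each $g_\ell$ depends only on its own timer choice, minimizing the sum reduces to minimizing every term separately; summing over intervals and then over contents shows that TTL-OPT achieves the global minimum $\sum_i \bigl(m_i + \sum_\ell \min(c_i\Delta_\ell, m_i)\bigr)$. I expect the domination step to be the main obstacle: one must carefully verify both that partial caching over an interval is never useful and that re-deciding the timer at every request makes the per-interval decisions genuinely independent, so that the interval-wise greedy choice cannot be beaten by any policy that tries to coordinate decisions across intervals or across contents.
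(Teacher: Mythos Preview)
Your proposal is correct and follows essentially the same route as the paper: decompose the total cost by content (using the absence of a capacity constraint), then for each content decompose by inter-request interval, reduce each interval's timer choice to the two non-dominated options $\{0,\Delta_\ell\}$ via the domination argument, and pick $\min(c_i\Delta_\ell,m_i)$. The paper's writeup is slightly more compact (it introduces an indicator $Z_{i,k}$ and expresses the cost as a linear sum over these indicators), but the structure and key ideas are identical.
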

\begin{proof}
Let $C_i(0,t)$ denote the total cost paid during the interval $[0,t]$ for content $i$, i.e.
\[C_i(0,t)=\int_0^t X_i(t) c_i  \diff t + M_i(t) m_i.\]
The total cost $C(0,t)$ in~\eqref{e:cost_ttl} is then given by the sum of the costs for each content.
The possibility to choose the timer value independently for each content reduces the minimization of the total cost $C(0,t)$ to separately minimize each term $C_i(0,t)$.

Let $\{t_{i,k}, k \in \mathbb N\}$ be the sequence of time instants of the requests for content $i$. A TTL policy needs to select a TTL value for each request, let us denote as $T_{i,k}$ the timer for the $k$-th request occurring at time $t_{i,k}$.
We observe that we can restrict ourselves to consider $T_{i,k}\in \{0, t_{i,k+1}- t_{i,k}\}$. In fact, consider any sequence of timer values $\{\hat T_{i,k}, k \in \mathbb N\}$, and let $\hat T_{i,h}$ be a timer such that $\hat T_{i,h} < t_{i,h+1}- t_{i,h}$. If we replace $\hat T_{i,h}$ with $T_{i,h}=0$, the cost $C_i(0,t)$ cannot increase. Similarly, we can replace any value $\hat T_{i,h}$ such that $\hat T_{i,h} > t_{i,h+1}- t_{i,h}$ with $T_{i,h}= t_{i,h+1}- t_{i,h}$, without increasing the cost $C_i(0,t)$.
Let then $Z_{i,k}$ be an indicator function such that $Z_{i,k}=1$ if $T_{i,k}=t_{i,k+1}- t_{i,k}$, and  $Z_{i,k}=0$ if $T_{i,k}=0$. The total cost for content $i$ can then be rewritten as follows:
\begin{equation}
\label{e:cost_linear_prog}
C_i(0,t_{i,k})=m_i+\sum_{h=0}^{k-1} \left( Z_{i,h} c_i (t_{i,h+1}- t_{i,h}) + (1- Z_{i,h}) m_i  \right),
\end{equation}
where the first term on the right hand side corresponds to the fact that the first request for content $i$ generates always a miss. From~\eqref{e:cost_linear_prog} if follows that $C_i(0,t_{i,k})$ is minimized by choosing $Z_{i,h}=1$ if $c_i (t_{i,h+1}- t_{i,h}) < m_i$ and $Z_{i,h}=1$ otherwise. This corresponds to what TTL-OPT does.
\end{proof}

Clearly, the TTL-OPT policy can not be used online. Nevertheless, given a trace, its cost can be computed (in polynomial time) and used as a reference.

\section{Implementation}
\label{sec:implementation}

In this section we present a practical, efficient implementation of a TTL cache with $O(1)$ complexity. Then, we describe the operation of our elastic caching system, by focusing on the load balancer algorithm that determines the total cache size, and hence the storage cost.

\subsection{Practical implementation of the TTL-based scheme}
\label{sec:ttl_practical}
A straight application of~\eqref{e:ttl_update} would require to update the timer immediately upon a miss, and then popularity estimates should be available for contents that are not in the cache. Instead, we will start estimating content popularity immediately after the content is stored in the cache and we will then postpone the timer update to the moment when the estimate is available. The detailed description follows.

Let $T(t)$ be the timer value at time $t$. If the timer is updated at $t$, then we denote as $T^-(t)$ the value immediately before  the update. Updates are, as above, driven by misses, and we denote as $t_n$ the time of the $n$-th miss and $r(n)$ the corresponding content. Upon a miss, content $r(n)$  is stored and its timer is set to the current value $T(t_n)$. Any new request for content $r(n)$ before the timer expiration will be a hit and will reset the timer to $T(t_n)$. The number of hits for content $r(n)$ during the interval $[t_n, t_n+T(t_n)]$ is recorded. Let us denote this number as $h_{r(n)}$. The ratio $h_{r(n)}/T(t_n)$ is an unbiased estimator of the rate $\lambda_{r(n)}$. Once this estimate is available at time $t_n+T(t_n)$, the timer is updated as follows:
\begin{align}
T(t_n+T(t_n)) = & T^-(t_n+T(t_n)) \nonumber\\
& + \epsilon(n)\left( -c_{r(n)} + \frac{h_{r(n)}}{T_{r(n)}}m_{r(n)}\right).
\label{eqn:ttl_update}
\end{align}
We observe that in general $T^-(t_n+T(t_n))$ is different from $T(t_n)$, because the timer may have been updated during $[t_n, t_n+T(t_n)]$ as effect of misses for contents other than $r(n)$.

As a further refinement, we notice that the cache is driven by two main events: request arrival and object eviction. The updates of the timer should be done during these events, so that we do not need to create a specific event for each miss for updating the timer. This adds an additional small delay, as it is shown in Fig.~\ref{fig:ttl_update}: given a content, the TTL update is triggered by the hit after the first timer (case \emph{a}), or, if no hit occurs after this time, by the content eviction (case \emph{b}).

\begin{figure}[htbp]
   \centering
   \includegraphics[width=0.9\linewidth]{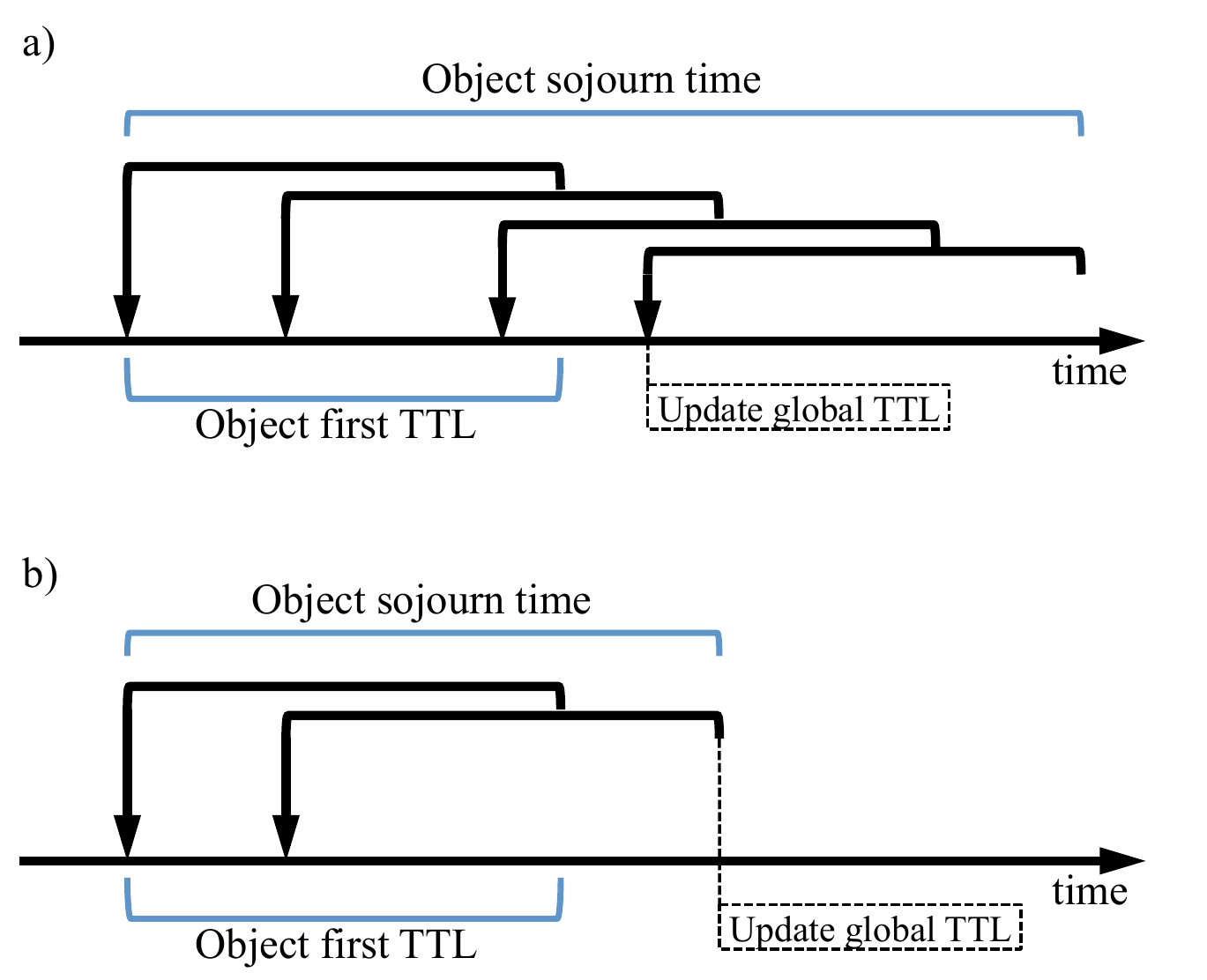}
   \caption{Global TTL update.}
   \label{fig:ttl_update}
\end{figure}

The update rule in \eqref{eqn:ttl_update} -- with the additional considerations shown in Fig.~\ref{fig:ttl_update} -- leads to a feasible implementation.
We observe that Proposition~\ref{p:stoch_approx} does not hold for this new algorithm for two reasons. First, it is not true that the different updates are independent and identically distributed (conditioned on the current timer value). For example, upon a miss for content $i$, it is less likely that the following miss will also be for content $i$, because the content was stored in the cache right after the first miss. Second, the update delays could in principle affect convergence. There are theoretical results for stochastic approximation algorithms when the correction terms are correlated and when updates are delayed, and we indeed think that the implementation described above may still converge, but we leave this study for further investigation, also because \eqref{eqn:ttl_update}, while implementable, would not satisfy our requirement on $O(1)$ computational complexity as we are going to discuss.

We observe that~\eqref{eqn:ttl_update} would require to store in a calendar the expiration instants of the timers, both to update the TTL values and to evict the contents whose timer expired without being renewed by another request. Unfortunately, the expirations instants are not necessarily ordered as the request arrival times: while it always holds $t_n > t_{n-1}$, it may be $t_n + T(t_n)<t_{n-1} + T(t_{n-1})$, because the TTL value may have been decreased between $t_{n-1}$ and $t_n$. As a consequence, the calendar would require a data structure that enables ordered data insertion, which has $O(\log M)$ complexity, where $M$ is the number of contents stored in the cache.

In order to maintain a $O(1)$ cost, we implement the calendar as a FIFO, so that events are orderly inserted at the head according to the request arrival times. For the eviction, we process the events from the tail until a non-expired event is found. The drawback of this approach is that objects that should be evicted because their timer is expired may persist for some additional time in the cache. Nevertheless, the impact of these objects on the overall performance should be negligible. To verify this claim, we have conducted extensive experiments, where we compare the TTL based solution corresponding with~\eqref{eqn:ttl_update} with our solution achieving $O(1)$ complexity, and we observed no significant difference in terms of TTL, instantaneous cache size, or final cost .

\subsection{Horizontally scalable cache system}
\label{sec:horizontal_system}

The TTL-based scheme discussed so far considers a single TTL cache, where the instantaneous  storage occupancy is billed. In other words, we have considered a perfect vertically-scalable system, where memory resources can be smoothly added and removed. In this section we discuss the design of a more practical horizontally-scalable system inspired by the TTL-based approach, where storage can only change at finite epochs.

In a horizontally-scalable solution, cache instances can be added or removed from the cluster, and all the instances have the same configuration. The first design choice to face is the configuration of a generic instance.

\vspace{1mm}
\noindent
{\bf Cache instances:}
These are the physical caches storing the actual contents and have fixed size. They can be implemented using Memcached or Redis with a simple eviction policy like LRU.

\vspace{1mm}
\noindent
{\bf Load balancer:}
The load balancer performs the ordinary operations, such as request routing, and content insertion, \ie, in case of a miss, after retrieving the object from the origin or the back-end, it stores it in one of the cache instances. In addition, the load balancer maintains a virtual cache, with the references of the requested objects: this virtual cache is going to be managed as a TTL cache according to the description in Sect.\,\ref{sec:ttl_practical} and then with $O(1)$ computational cost per  request. The size of the virtual cache depends on the timer value $T$, which in turn depends on the number of hits and misses, and on the corresponding costs for the storage and for the misses. Thus, the size of the virtual cache summarizes the influence of costs of the previous requests, and can be used to determine the number of actual instances to employ in the cluster.

\vspace{1mm}
\noindent
{\bf Proposed scheme:}
Our scheme is described in Algorithm\,\ref{alg:ttl_based}. At every request, we look for the object key in the virtual cache, update its position in case of a hit, or add it in case of a miss. Then, we start evicting objects from the virtual cache if they are expired. While inserting a new object or removing expired objects, we update the total size of the cache (the sum of the sizes of non-expired objects). Clearly, object sizes can be heterogeneous. At the end of the epoch (line \ref{algline:exp}), we look at the size of the virtual cache and we select the number of instances such that the sum of the sizes is the closest to the virtual cache size (line \ref{algline:update}).

At the end of the observation interval, if the number of instances has changed, the load balancer reassigns the responsibility of the hash space to the current instances.

\begin{algorithm}
\SetKwInOut{Input}{input}\SetKwInOut{Output}{output}
\DontPrintSemicolon
\SetAlgoVlined
\Input{VC, Virtual Cache}
\Input{$S_p$, Physical cache size}
\Output{$I(k+1)$, \# of the instances in $k+1$-th epoch}
\BlankLine
	\ForEach{request $r$}{
		\If{($r \in$ VC)}{
			\textsc{Remove}($r$, VC)\;
		}
		$r.expire \leftarrow t_{\mbox{now}} + TTL_{\mbox{now}}$\;
		\textsc{Insert}($r$, VC)\;
		\textsc{EvictExpired}(VC)\;
	}
	\If{(epoch $k$ ended) \label{algline:exp}}{
	 $I(k+1) \leftarrow$ \textsc{Round}(VC.size / $S_p$)\; \label{algline:update}
	}
	\caption{TTL-based scaling}
	\label{alg:ttl_based}
\end{algorithm}

\vspace{1mm}
\noindent
{\bf Additional considerations:}
We observe that objects stored in the physical caches may be different from the ones maintained by the virtual cache. When a physical cache needs to make space for new data, it may evict one content before the timer of the corresponding ghost at the virtual cache is expired. On the other side, the eviction of the ghost does not cause the eviction of the actual content at the cache. Moreover, when instances are added or removed from the cluster, the object key space responsibility must be rearranged, which may lead to a spurious misses due to route changes: the object is in a physical cache, but the request is routed to a different one. Overall, therefore, we will observe virtual misses at the virtual cache, and actual misses at the physical cache, and these two values may be different. We have experimentally observed that, since the number of requests within an epoch is usually very high, the effect of spurious misses due to the change of the number of instances is negligible.

\section{Experimental evaluation}
\label{sec:experim-eval}

\subsection{Setup}
\label{sec:experim-setup}
We evaluate our approach using a \emph{testbed} that is representative of a typical web architecture.

\vspace{1mm}
\noindent
{\bf Testbed:} An application server is connected to a database and to a cluster of caches. The application receives the requests and checks if the content is stored in the cache. If the content is not in the cache, the application server retrieves the object from the database, serves the client and stores the object in the cache. For the operations related to the cache (e.g., object lookup), the application server relies on a \emph{load balancer}. We have implemented the scheme described in Sect.\,\ref{sec:horizontal_system} in a custom tool similar to \texttt{mcrouter} \cite{mcrouter}. The tool is able to add or remove the cache instances from a local cluster, but it can be easily extended so that it can use the APIs of a cloud cache provider.

\vspace{1mm}
\noindent
{\bf Trace:} The requests sent to the application server are generated by reproducing two anonymized traces collected for over 30 days and 5 days from two vantage points of the Akamai network. The traces we tested contain the timestamp of the request arrival, the anonymized object ID and the size of the requested object. We report the results for the 30-day trace, since we obtain similar qualitative results with the 5-day trace. In the 30-day trace, there are $2\cdot10^9$ requests for 110 millions contents, whose size varies from few bytes to tens of MB. Figure\,\ref{fig:trace_intput} (left-hand side) shows the number of requests for each object, sorted by rank (in terms of popularity). The right-hand side shows the empirical Cumulative Distribution Function (CDF) for the size of the requested objects (without aggregating requests for the same object). We could not carry on our experiments with other traces, like the ones collected in the public repository \cite{sniaRep}, because they refer to low level storage (block I/O), and they are not representative of a typical cloud-based application, such as a web service.

\begin{figure}[htbp]
   \centering
   \includegraphics[width=0.48\linewidth]{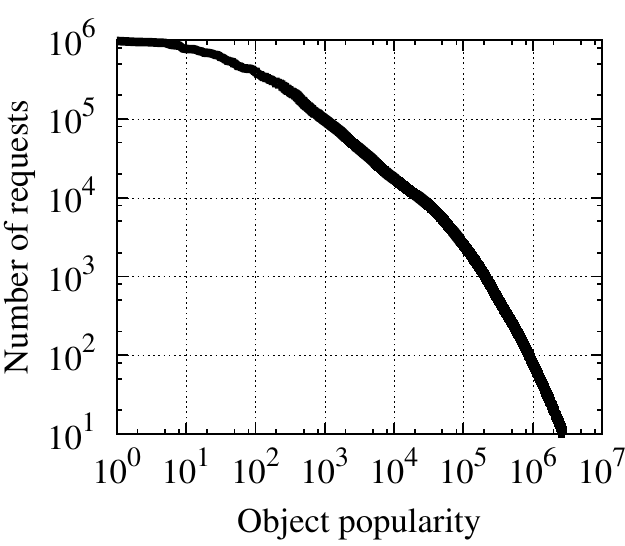}
   \includegraphics[width=0.48\linewidth]{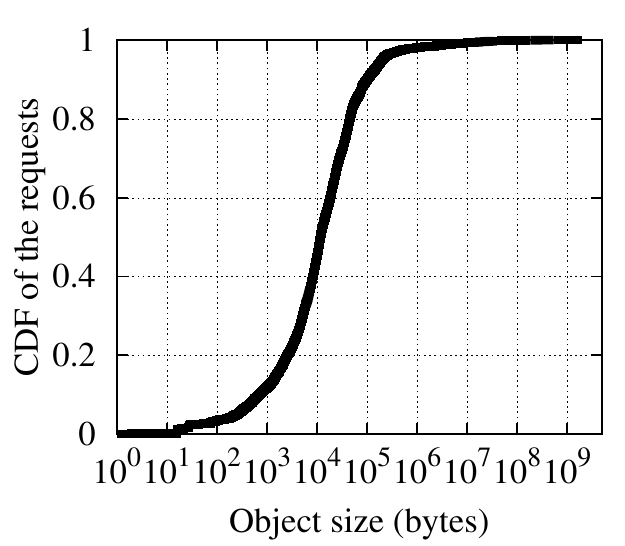}
   \caption{Number of requests per object, ordered by rank (left), and cumulative fraction of the requests for objects up to a given size (right).}
   \label{fig:trace_intput}
\end{figure}

\vspace{1mm}
\noindent
{\bf Settings:} For the configurations and the costs, we refer to Amazon ElastiCache service \cite{elastiCachePricing}. For the duration of the epoch, we consider the minimum billing time, which is one hour. Among the different instances' options, we selected the \emph{cache.t2.micro} with 0.555 GB RAM and one vCPU, which costs 0.017\$/hour (Oct. 2017, US based). We use a small instance since it provides a fine granularity when we resize the cluster: for instance the experimental results in Fig.\,\ref{fig:ttl_cacheSize_trace} shows that one small instance is sufficient during low traffic periods. Moreover, bigger instances (e.g., with 3.22 GB or 6.05 GB) have just two vCPUs, which may limit the throughput of the cache. Replicating small instances each with a vCPU helps in maintaining the throughput while scaling the cluster. As for the cache, we use Redis rather than Memcached in order to avoid problems related to calcification \cite{carra2014memory} \cite{hu2015lama} \cite{ou2015penalty}.

In order to determine reasonable miss costs, we reasoned as follows. The production server from which our trace was collected had an in-memory cache of $4$ GB~\cite{berger2017adaptsize}, i.e.~roughly corresponding to eight \emph{cache.t2.micro} instances. We assume that this is a well engineered system whose cache size has been selected so that storage and miss costs are equal. This is a reasonable rule of thumb to select a close-to-optimal size.
The storage cost can be determined in our case considering the corresponding hourly cost of eight \emph{cache.t2.micro} instances. By dividing this cost by the average number of misses observed during one our in production, we obtain the cost per miss
(in our case, 1.4676 $\times 10^{-7}$ \$ per miss).

\vspace{1mm}
\noindent
{\bf Previous solutions:} Because of the considerations above, we consider as baseline setting a scenario with eight \emph{cache.t2.micro} instances. We compare also our results with an elastic resource allocation scenario driven by the MRC-approach, as described in \cite{saemundsson2014dynamic} and discussed in Sect.\,\ref{sec:rel-works}. In addition, as a reference, we consider the scenario with an ideal, vertically scalable, pure TTL cache, billed according to its instantaneous size.

\subsection{Results}
We present here the results for the trace described in Sect.\,\ref{sec:experim-setup}. We have also performed an extensive study using synthetic traces generated according to the IRM model -- which is the arrival pattern for which the theoretical results in  Proposition~\ref{p:stoch_approx} hold. In such experiments, it is possible to see that the TTL indeed reaches a stable value, which corresponds to the minimum cost.

With a real trace, the arrival pattern varies over time. Our TTL approach continuously tracks such a variation: this is shown in Fig.\,\ref{fig:ttl_cacheSize_trace} (left), where we plot the value of the TTL for an interval of four representative days: the evolution clearly follows a daily pattern. The fluctuation of the TTL is mirrored by the virtual cache size (Fig.\,\ref{fig:ttl_cacheSize_trace}, right), which varies from zero (the cost of the few misses does not justify the storage of the object) to 3.5 GB.

\begin{figure}[htbp]
   \centering
   \hspace{-2.0em}
   \includegraphics[width=0.53\linewidth]{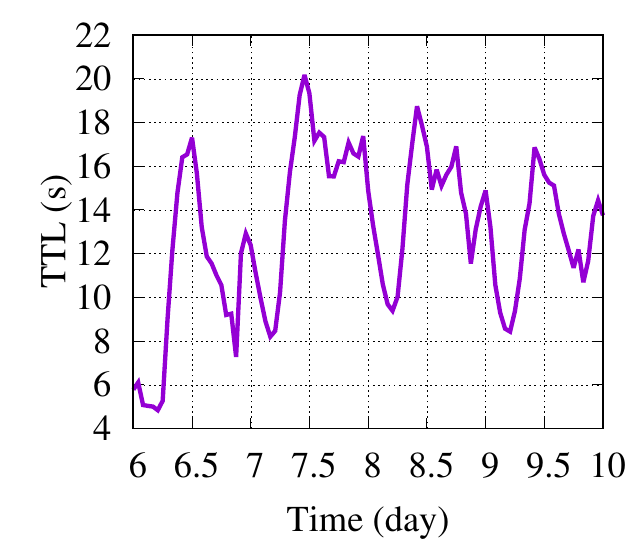}
   \hspace{-1.0em}
   \includegraphics[width=0.53\linewidth]{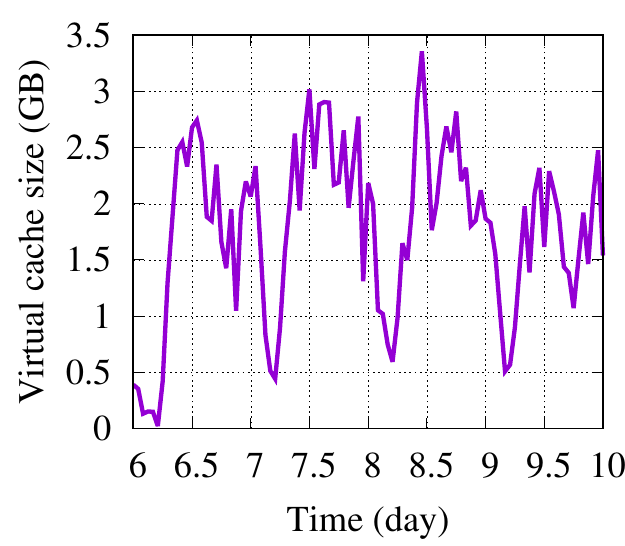}
   \caption{Virtual cache: TTL over time (left), and cache size (right).}
   \label{fig:ttl_cacheSize_trace}
\end{figure}

The virtual cache size translates into the number of instances used in the cluster. From this, it is possible to compute the total cost for storage and misses. In Fig.\,\ref{fig:dyn_cumulcost} we show the cumulative costs for the first 15 days (along with a zoom at day 15) for the TTL-based system, and we compare it with a 8-instance fixed-size cache (corresponding to our reference in-memory production cache) and the MRC-based approach (discussed in Sect.\,\ref{sec:rel-works}). The figure plots also the total cumulative cost of an ideal TTL-cache. The results show that the TTL-based approach obtains similar cumulative costs as the MRC-based approach, but with a $O(1)$ complexity instead of $O(\log M)$ complexity. Overall, with respect to the baseline fixed-size approach, the TTL-based approach is able to save 17\% of the costs.
The difference between the ideal and the practical TTL-based implementation is due to the discretization of cache sizes and billing periods, and the spurious misses due to the reorganization of the object key responsibility. Nevertheless, such a difference causes only a  2\% cost increase in comparison to the ideal implementation. Interestingly, this result suggests that, at least for typical CDN applications, there is no need for finer-grained billing periods or cache sizes, but most of the potential improvement is already achievable with the current offer.

\begin{figure}[t]
   \centering
   \hspace{-2.0em}
   \includegraphics[width=0.53\linewidth]{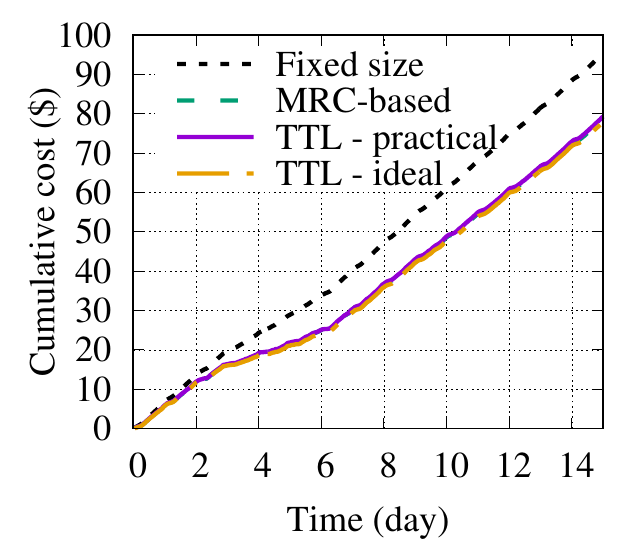}
   \hspace{-1.0em}
   \includegraphics[width=0.53\linewidth]{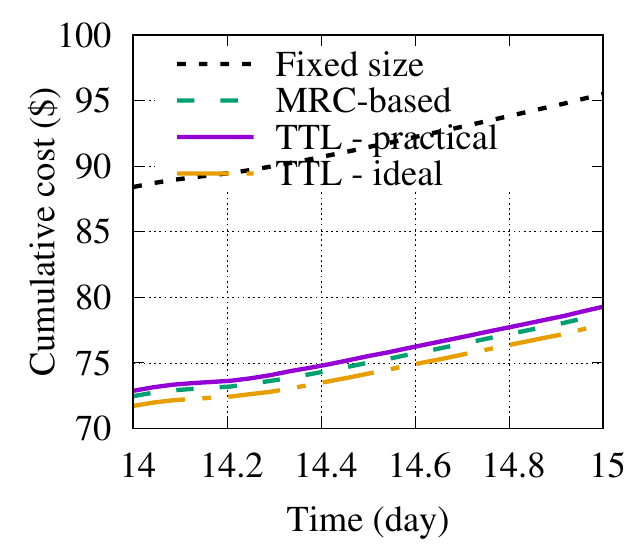}
   \caption{Cumulative cost of TTL-based approach compared to fixed-size, MRC-based, and ideal pure TTL (zoom on the right).}
   \label{fig:dyn_cumulcost}
\end{figure}

In Figure\,\ref{fig:dyn_variouscost} we show the two cost components: the cumulative storage cost (left) and the cumulative misses costs (right). MRC-based solution maintains a smaller number of instances, which translates into a slighlty higher cost due to misses. Nevertheless, their sum is similar to the one obtained by the TTL-based approach, suggesting that, when we are close to the minimum, different configuration options are available.

\begin{figure}[t]
   \centering
   \hspace{-2.0em}
   \includegraphics[width=0.53\linewidth]{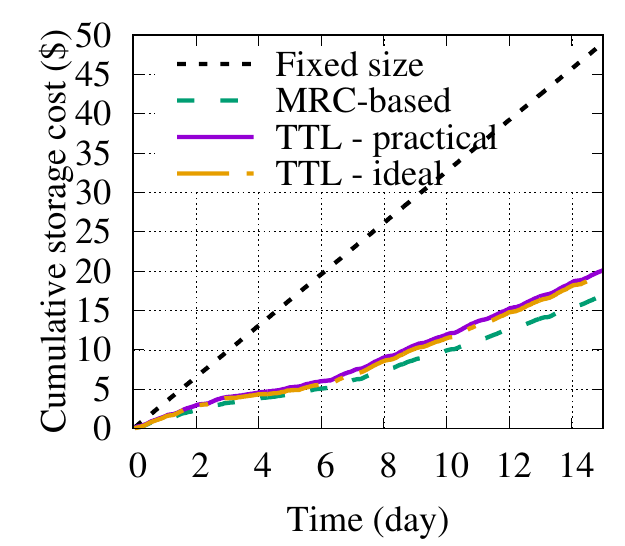}
   \hspace{-1.0em}
   \includegraphics[width=0.53\linewidth]{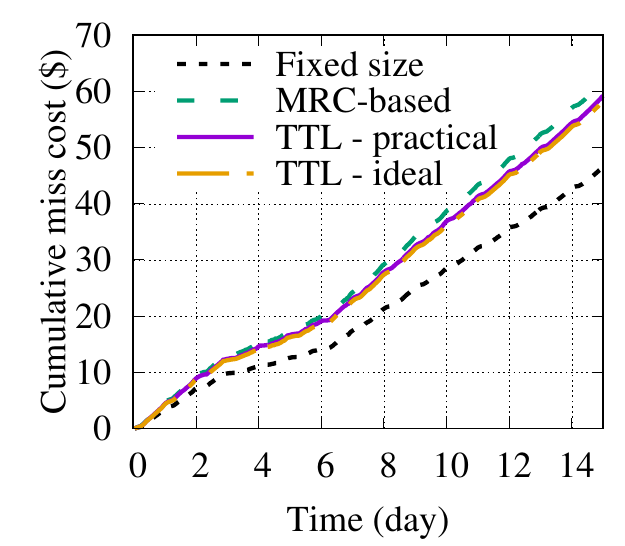}
   \caption{Cumulative storage cost (left) and cumulative miss cost (right) for different approaches.}
   \label{fig:dyn_variouscost}
\end{figure}

As we anticipated in Sect.\,\ref{subsec:complexity}, Fig.\,\ref{fig:MRCvsTTL_perf}, the computational complexity of our approach translates into a lightly loaded CPU. Instead, the MRC-based approach imposes a non negligible toll on CPU resources. This translates into additional costs, that we have not considered in this paper. In fact, as shown in   Fig.\,\ref{fig:MRCvsTTL_perf} (right) for our experimental setup, using a MRC-based approach harms the load balancer, that can ingest only half of the incoming requests. Thus, in a practical setting, the load balancer is likely to be scaled up, inducing additional fixed costs.

Figure\,\ref{fig:opt_res} compares our TTL solution with the clairvoyant TTL-OPT described in Sect\,\ref{sec:ttl-opt}. We see that there is room for even more significant cost savings: TTL-OPT achieves a cost that is one third of the baseline. TTL-OPT assumes to know the sequence of future requests and is thus unpractical. Nevertheless, this result suggests that potential improvements can come from TTL policies that use different TTL values for different contents (as TTL-OPT does) selecting the timer value on the basis of a forecast for the next inter-arrival time. In the future we plan to investigate this possibility.

\begin{figure}[t]
   \centering
   \includegraphics[width=0.53\linewidth]{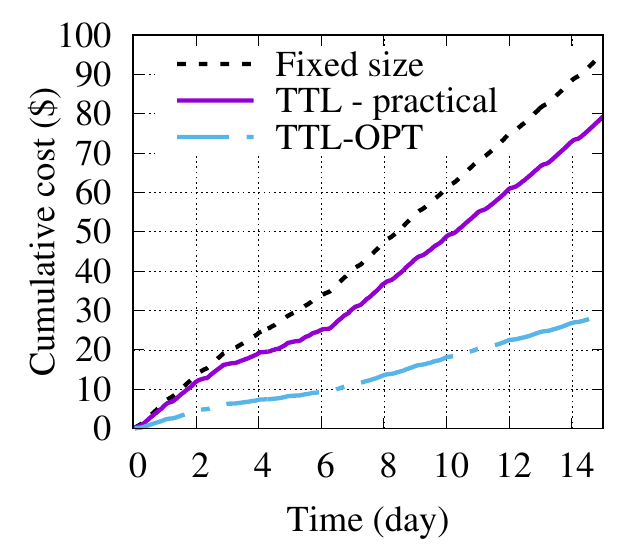}
   \caption{Optimal cost based policy: cumulative costs over time.}
   \label{fig:opt_res}
\end{figure}

Since we are dealing with a (dynamic) distributed cache, one may wonder if the assignment of object keys to cache instances is balanced. Note that Redis does not use consistent hashing, but a two-step scheme \cite{redisCluster}. There are 16384 slots, and objects keys are hashed into one of the slots. Each slot is randomly assigned to a server. When a new server is added, some randomly selected slots are transferred to the new server. When a server is removed, its slots are transferred to the other randomly selected servers.

To understand if each server maintains the responsibility of approximately the same number of slots, we have considered, for each interval, the minimum and the maximum number of assigned slots to each server, and we have normalized them with the expected number of slots per server. When there is just one server, clearly the minimum and the maximum and the expected number of slots are the same. In Figure\,\ref{fig:other_res_2} (left), we can see that each server deviated from the expected number of slots by at most 2.5\%.

\begin{figure}[htbp]
   \centering
   \includegraphics[width=0.32\linewidth]{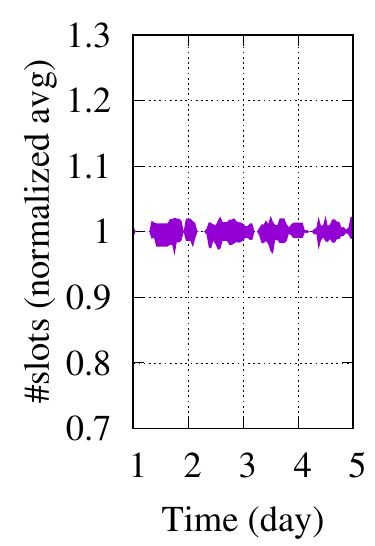}
   \includegraphics[width=0.32\linewidth]{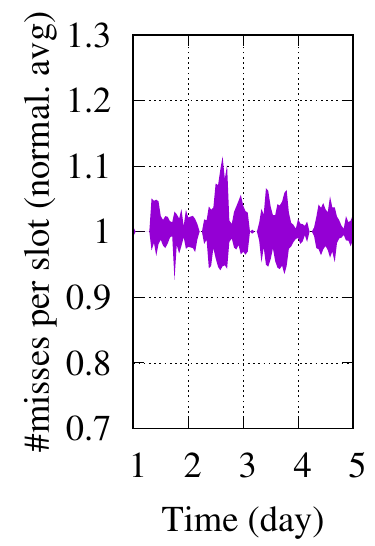}
   \includegraphics[width=0.32\linewidth]{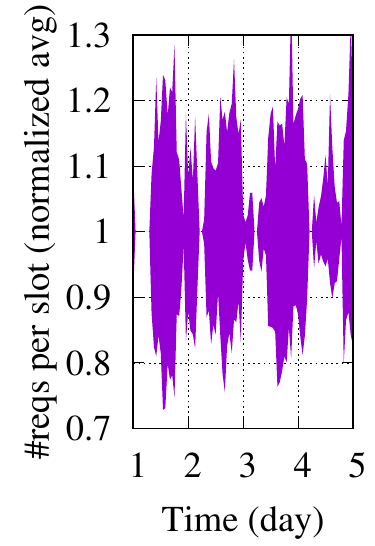}
   \caption{Normalized mean number of slots per server, misses per slot and requests per slot.}
   \label{fig:other_res_2}
\end{figure}

Similarly, we computed the number of misses per server (minimum and maximum, normalized to the total number of misses divided by the number of servers). Here the distribution among server is more spread, with servers that sometimes get 10\% more misses than the expected average. In addition, we have considered the number of requests per server. The load balancer tries to achieve distributed evenly the keys among the servers, but the actual number of requests depends on the popularity of a key.  Figure\,\ref{fig:other_res_2} (right) shows that sometimes servers need to respond to 30\% more traffic than the average. Rebalancing the server load can be done with known techniques that keeps track of the highly loaded slot and balance them among the servers, such as the ones presented in \cite{hong2013understanding} and \cite{cheng2015memory}.

We conclude our evaluation by observing that our results hold for a single-server trace collected by Akamai in a 30 days period. It is thus tempting to project our results to the current scale of a major CDN provider, for a yearly timeframe. According to the trend we measure in Fig.\,\ref{fig:dyn_cumulcost}, our approach can potentially save millions of dollars, when compared to a best practice static configuration.

\section{Conclusion}
\label{sec:concl}

Dynamic sizing of cloud caches allows cloud users to adapt the cache size to the traffic pattern and minimize their total cost, which is given by the cost of the storage and the cost of the misses. We studied a TTL-based solution to dynamically track the required cache size. We provided a theoretical lower bound for the cost achievable by TTL solutions: in fact we characterize the optimal TTL policy (TTL-OPT) when the sequence of future requests is known. Moreover, we discussed a practical low-complexity implementation of a TTL solution, and evaluated it using real-world traces. Our experiments shows that our solution obtains up to 17\% cost savings compared to a baseline static setting. 
Our results also suggests that, at least for typical CDN applications, there is no need for finer-grained billing periods or cache sizes, but most of the potential improvement is already achievable with the current offer.

Encouraged by the experimental results related to a practical TTL cache implementation, we are exploring, from a theoretical point of view, the impact of the update delay on the convergence of TTL stochastic update rule. Moreover, our comparison with TTL-OPT suggests that there are possibilities for significant additional cost savings (up to 66\%), if TTL values can be adapted on a per-content basis as a function of the specific arrival pattern.

{\footnotesize \bibliographystyle{acm}
\bibliography{cost-aware-ttl_arxiv}}

\begin{thebibliography}{10}

\bibitem{amazonElastiCache}
{Amazon Web Service ElastiCache}.
\newblock \url{https://aws.amazon.com/elasticache/}.
\newblock Accessed: Jan. 2018.

\bibitem{elastiCachePricing}
{Amazon Web Service ElastiCache Pricing}.
\newblock \url{https://aws.amazon.com/elasticache/pricing/}.
\newblock Accessed: Jan. 2018.

\bibitem{mcrouter}
{Facebook mcrouter}.
\newblock \url{https://github.com/facebook/mcrouter}.
\newblock Accessed: Jan. 2018.

\bibitem{delayCost}
{How Loading Time Affects Your Bottom Line}.
\newblock \url{https://blog.kissmetrics.com/loading-time/}.
\newblock Accessed: Jan. 2018.

\bibitem{memcached}
{Memcached}.
\newblock \url{https://memcached.org/}.
\newblock Accessed: Jan. 2018.

\bibitem{azureRedisCache}
{Microsoft Azure Redis Cache}.
\newblock \url{https://azure.microsoft.com/en-us/services/cache/}.
\newblock Accessed: Jan. 2018.

\bibitem{redis}
{Redis}.
\newblock \url{https://redis.io/}.
\newblock Accessed: Jan. 2018.

\bibitem{redisCluster}
{Redis Cluster Specification}.
\newblock \url{https://redis.io/topics/cluster-spec}.
\newblock Accessed: Jan. 2018.

\bibitem{sniaRep}
{SNIA iotta repository block I/O traces}.
\newblock \url{http://iotta. snia.org/tracetypes/3}.
\newblock Accessed: Jan. 2018.

\bibitem{Atik12}
{\sc Atikoglu, B., Xu, Y., Frachtenberg, E., Jiang, S., and Paleczny, M.}
\newblock Workload analysis of a large-scale key-value store.
\newblock In {\em Proceedings of the 12th ACM SIGMETRICS/PERFORMANCE joint
  international conference on Measurement and Modeling of Computer Systems\/}
  (2012).

\bibitem{basu2017adaptive}
{\sc Basu, S., Sundarrajan, A., Ghaderi, J., Shakkottai, S., and Sitaraman, R.}
\newblock Adaptive ttl-based caching for content delivery.
\newblock In {\em Proceedings of the 2017 ACM SIGMETRICS/International
  Conference on Measurement and Modeling of Computer Systems\/} (2017),
  pp.~45--46.

\bibitem{belady1966study}
{\sc Belady, L.~A.}
\newblock A study of replacement algorithms for a virtual-storage computer.
\newblock {\em IBM Systems journal 5}, 2 (1966), 78--101.

\bibitem{berger2017adaptsize}
{\sc Berger, D.~S., Sitaraman, R.~K., and Harchol-Balter, M.}
\newblock Adaptsize: Orchestrating the hot object memory cache in a content
  delivery network.
\newblock In {\em NSDI\/} (2017), pp.~483--498.

\bibitem{cao1997cost}
{\sc Cao, P., and Irani, S.}
\newblock Cost-aware www proxy caching algorithms.
\newblock In {\em Usenix symposium on internet technologies and systems\/}
  (1997), vol.~12, pp.~193--206.

\bibitem{carra2014memory}
{\sc Carra, D., and Michiardi, P.}
\newblock Memory partitioning in memcached: An experimental performance
  analysis.
\newblock In {\em Communications (ICC), 2014 IEEE International Conference
  on\/} (2014), IEEE, pp.~1154--1159.

\bibitem{che02}
{\sc Che, H., Tung, Y., and Wang, Z.}
\newblock {Hierarchical Web caching systems: modeling, design and experimental
  results}.
\newblock {\em Selected Areas in Communications, IEEE Journal on 20}, 7 (Sep
  2002), 1305--1314.

\bibitem{cheng2015memory}
{\sc Cheng, Y., Gupta, A., and Butt, A.~R.}
\newblock An in-memory object caching framework with adaptive load balancing.
\newblock In {\em Proceedings of the Tenth European Conference on Computer
  Systems\/} (2015), p.~4.

\bibitem{cidon2015dynacache}
{\sc Cidon, A., Eisenman, A., Alizadeh, M., and Katti, S.}
\newblock Dynacache: Dynamic cloud caching.
\newblock In {\em HotStorage\/} (2015).

\bibitem{coffman1973operating}
{\sc Coffman, E.~G., and Denning, P.~J.}
\newblock {\em Operating systems theory}, vol.~973.
\newblock Prentice-Hall Englewood Cliffs, NJ, 1973.

\bibitem{dehghan16}
{\sc Dehghan, M., Massouli\'e, L., Towsley, D., Menasche, D., and Tay, Y.~C.}
\newblock A utility optimization approach to network cache design.
\newblock In {\em IEEE INFOCOM 2016 - The 35th Annual IEEE International
  Conference on Computer Communications\/} (April 2016), pp.~1--9.

\bibitem{choungmo14}
{\sc Fofack, N.~C., Nain, P., Neglia, G., and Towsley, D.}
\newblock {Performance evaluation of hierarchical TTL-based cache networks}.
\newblock {\em Computer Networks 65\/} (2014), 212 -- 231.

\bibitem{garetto16}
{\sc Garetto, M., Leonardi, E., and Martina, V.}
\newblock A unified approach to the performance analysis of caching systems.
\newblock {\em ACM Trans. Model. Perform. Eval. Comput. Syst. 1}, 3 (May 2016),
  12:1--12:28.

\bibitem{hong2013understanding}
{\sc Hong, Y.-J., and Thottethodi, M.}
\newblock Understanding and mitigating the impact of load imbalance in the
  memory caching tier.
\newblock In {\em Proceedings of the 4th annual Symposium on Cloud Computing\/}
  (2013), ACM, p.~13.

\bibitem{hosseini2000optimal}
{\sc Hosseini-Khayat, S.}
\newblock On optimal replacement of nonuniform cache objects.
\newblock {\em IEEE Transactions on Computers 49}, 8 (2000), 769--778.

\bibitem{hu2015lama}
{\sc Hu, X., Wang, X., Li, Y., Zhou, L., Luo, Y., Ding, C., Jiang, S., and
  Wang, Z.}
\newblock Lama: Optimized locality-aware memory allocation for key-value cache.
\newblock In {\em Proceedings of USENIX Annual Technical Conference (USENIX
  ATC)\/} (2015), pp.~57--69.

\bibitem{hu2016kinetic}
{\sc Hu, X., Wang, X., Zhou, L., Luo, Y., Ding, C., and Wang, Z.}
\newblock Kinetic modeling of data eviction in cache.
\newblock In {\em USENIX Annual Technical Conference\/} (2016), pp.~351--364.

\bibitem{kushner03}
{\sc Kushner, H., and Yin, G.}
\newblock {\em Stochastic Approximation and Recursive Algorithms and
  Applications}.
\newblock Stochastic Modelling and Applied Probability. Springer New York,
  2003.

\bibitem{lee1999existence}
{\sc Lee, D., Choi, J., Kim, J.-H., Noh, S.~H., Min, S.~L., Cho, Y., and Kim,
  C.~S.}
\newblock On the existence of a spectrum of policies that subsumes the least
  recently used (lru) and least frequently used (lfu) policies.
\newblock In {\em ACM SIGMETRICS Performance Evaluation Review\/} (1999),
  vol.~27, ACM, pp.~134--143.

\bibitem{lim2010automated}
{\sc Lim, H.~C., Babu, S., and Chase, J.~S.}
\newblock Automated control for elastic storage.
\newblock In {\em Proceedings of the 7th international conference on Autonomic
  computing\/} (2010), ACM, pp.~1--10.

\bibitem{lorido2014review}
{\sc Lorido-Botran, T., Miguel-Alonso, J., and Lozano, J.~A.}
\newblock A review of auto-scaling techniques for elastic applications in cloud
  environments.
\newblock {\em Journal of Grid Computing 12}, 4 (2014), 559--592.

\bibitem{Mattson}
{\sc Mattson, R.~L., Gecsei, J., Slutz, D.~R., and Traiger, I.~L.}
\newblock Evaluation techniques for storage hierarchies.
\newblock {\em IBM Syst. J. 9}, 2 (June 1970), 78--117.

\bibitem{megiddo2003arc}
{\sc Megiddo, N., and Modha, D.~S.}
\newblock Arc: A self-tuning, low overhead replacement cache.
\newblock In {\em FAST\/} (2003), vol.~3, pp.~115--130.

\bibitem{neglia17tompecs}
{\sc Neglia, G., Carra, D., Feng, M., Janardhan, V., Michiardi, P., and
  Tsigkari, D.}
\newblock Access-time-aware cache algorithms.
\newblock {\em ACM Trans. Model. Perform. Eval. Comput. Syst. 2}, 4 (Nov.
  2017), 21:1--21:29.

\bibitem{ou2015penalty}
{\sc Ou, J., Patton, M., Moore, M.~D., Xu, Y., and Jiang, S.}
\newblock A penalty aware memory allocation scheme for key-value cache.
\newblock In {\em Proceedings of International Conference on Parallel
  Processing (ICPP)\/} (2015), pp.~530--539.

\bibitem{saemundsson2014dynamic}
{\sc Saemundsson, T., Bjornsson, H., Chockler, G., and Vigfusson, Y.}
\newblock Dynamic performance profiling of cloud caches.
\newblock In {\em Proceedings of the ACM Symposium on Cloud Computing\/}
  (2014), ACM, pp.~1--14.

\bibitem{shen2011cloudscale}
{\sc Shen, Z., Subbiah, S., Gu, X., and Wilkes, J.}
\newblock Cloudscale: elastic resource scaling for multi-tenant cloud systems.
\newblock In {\em Proceedings of the 2nd ACM Symposium on Cloud Computing\/}
  (2011), ACM, p.~5.

\bibitem{waldspurger2017cache}
{\sc Waldspurger, C., Saemundsson, T., Ahmad, I., and Park, N.}
\newblock Cache modeling and optimization using miniature simulations.
\newblock In {\em Proceedings of USENIX ATC\/} (2017), pp.~487--498.

\bibitem{waldspurger2015efficient}
{\sc Waldspurger, C.~A., Park, N., Garthwaite, A.~T., and Ahmad, I.}
\newblock Efficient mrc construction with shards.
\newblock In {\em FAST\/} (2015), pp.~95--110.

\bibitem{wang2017exploiting}
{\sc Wang, C., Urgaonkar, B., Gupta, A., Kesidis, G., and Liang, Q.}
\newblock Exploiting spot and burstable instances for improving the
  cost-efficacy of in-memory caches on the public cloud.
\newblock In {\em Proceedings of the Twelfth European Conference on Computer
  Systems\/} (2017), ACM, pp.~620--634.

\bibitem{wires2014characterizing}
{\sc Wires, J., Ingram, S., Drudi, Z., Harvey, N.~J., Warfield, A., and Data,
  C.}
\newblock Characterizing storage workloads with counter stacks.
\newblock In {\em OSDI\/} (2014), pp.~335--349.

\bibitem{xu2016blending}
{\sc Xu, Z., Stewart, C., Deng, N., and Wang, X.}
\newblock Blending on-demand and spot instances to lower costs for in-memory
  storage.
\newblock In {\em Computer Communications, IEEE INFOCOM 2016-The 35th Annual
  IEEE International Conference on\/} (2016), IEEE, pp.~1--9.

\bibitem{zhong2009program}
{\sc Zhong, Y., Shen, X., and Ding, C.}
\newblock Program locality analysis using reuse distance.
\newblock {\em ACM Trans. Program. Lang. Syst. 31}, 6 (2009), 1--39.

\end{thebibliography}

\end{document}